\theoremstyle{plain}
\newtheorem{thm}{Theorem}[section]
\newtheorem{theorem}[thm]{Theorem}
\theoremstyle{remark}
\newtheorem{claim}[thm]{Claim}
\theoremstyle{definition}
\renewcommand{\phi}{\varphi}
\newcommand{\gtil}{\widetilde{g}}
\newcommand{\Ktil}{\widetilde{K}}
\newcommand{\Wtil}{\widetilde{W}}
\newcommand{\phitil}{\widetilde{\phi}}
\newcommand{\sigmatil}{\widetilde{\sigma}}
\DeclareMathOperator{\tr}{tr}
\DeclareMathOperator{\divg}{div}
\newcommand{\definedas}{\mathrel{\raise.095ex\hbox{\rm :}\mkern-5.2mu=}}
\begin{document}


\title[Non-existence for generalized constraints equations]
{A non-existence result for a generalization of the equations of
the conformal method in general relativity}

\author{Mattias Dahl}
\address{Institutionen f\"or Matematik \\
 Kungliga Tekniska H\"ogskolan \\
 100 44 Stockholm \\
 Sweden} \email{dahl@math.kth.se}

\author{Romain Gicquaud}
\address{Laboratoire de Math\'ematiques et de Physique Th\'eorique \\
 UFR Sciences et Technologie \\
 Facult\'e Fran\c cois Rabelais \\
 Parc de Grandmont \\
 37200 Tours \\
 France} \email{romain.gicquaud@lmpt.univ-tours.fr}

\author{Emmanuel Humbert}
\address{Laboratoire de Math\'ematiques et de Physique Th\'eorique \\
 UFR Sciences et Technologie \\
 Facult\'e Fran\c cois Rabelais \\
 Parc de Grandmont \\
 37200 Tours \\
 France} \email{emmanuel.humbert@lmpt.univ-tours.fr}

\begin{abstract}
The constraint equations of general relativity can in many cases be 
solved by the conformal method. We show that a slight modification of 
the equations of the conformal method admits no solution for
a broad range of parameters. This suggests that the question of existence 
or non-existence of solutions to the original equations is more subtle 
than could perhaps be expected.
\end{abstract}

\subjclass[2010]{53C21 (Primary), 35Q75, 53C80, 83C05 (Secondary)}
%
%
%


\keywords{Einstein constraint equations, non-constant mean curvature,
 conformal method.}

\maketitle

\tableofcontents

\section{Introduction}

Initial data for the Cauchy problem in general relativity consists of a 
Riemannian manifold $(M, \gtil)$ and a symmetric 2-tensor $\Ktil$ on $M$
satisfying the following equations,
\begin{subequations} \label{eqConstraint}
\begin{align}
R^{\gtil} - |\Ktil|_{\gtil}^2 + (\tr^{\gtil} K)^2 &= 0, \label{h0} \\
\divg^{\gtil} \Ktil - d \, \tr^{\gtil} \Ktil &= 0. \label{m0}
\end{align}
\end{subequations}
Here $\gtil$ represents the metric induced by the space-time metric on
the Cauchy surface $M$, $\Ktil$ is its second fundamental form, and
$R^{\gtil}$ is the scalar curvature of the metric $\gtil$. The constraint 
equations \eqref{eqConstraint} follow from the vacuum Einstein equation 
for the space-time metric.

Constructing and classifying solutions of this system is an important
issue. Background and many results are summarized in the excellent review 
article \cite{BartnikIsenberg}. One of the most important methods to 
construct solutions to this system is the conformal method. Its main idea 
is to choose part of the initial data as given and then solve for the rest 
of the data.

For the given data of an $n$-dimensional Riemannian manifold $(M, g)$, 
$n \geq 3$, a function $\tau$ on $M$ and a symmetric traceless 
divergence-free 2-tensor $\sigma$ on $M$, one seek for a positive 
function $\phi$ and a 1-form $W$ on $M$ such that
\[
\gtil = \phi^{N-2} g, \quad
\Ktil = \frac{\tau}{n} \gtil + \phi^{-2} (\sigma + LW),
\]
solve the constraint equations \eqref{eqConstraint}. Here we have 
set $N \definedas \frac{2n}{n-2}$. Further, $L$ denotes the conformal 
Killing operator acting on 1-forms. In coordinates
\[
(LW)_{ij} = \nabla_i W_j + \nabla_j W_i - \frac{2}{n} \nabla^k W_k g_{ij},
\]
where $\nabla$ is the Levi-Civita connection associated to the metric 
$g$. Note that $\tau$ corresponds to the mean curvature of $M$ embedded 
in the space-time solving Einstein's equations.

The constraint equations are satisfied if $\phi$ and $W$ solve the 
system
\begin{subequations} \label{eqOriginal}
\begin{align}
\frac{4(n-1)}{n-2} \Delta \phi + R \phi 
&= 
-\frac{n-1}{n} \tau^2 \phi^{N-1} + |\sigma + LW|^2 \phi^{-N-1},
 \label{hamiltonian} \\
-\frac{1}{2} L^* L W 
&= 
\frac{n-1}{n} \phi^N d\tau,
 \label{momentum}
\end{align}
\end{subequations}
where $\Delta$ is the non-negative Laplacian and $L^*$ is the formal
$L^2$-adjoint of $L$. As with all other metric-dependent objects used 
from here on they are defined using the metric $g$. The first equation is 
known as the Lichnerowicz equation while the second is called the vector 
equation.

Of special importance is the case where $\tau$ is constant, so that 
the corresponding Cauchy surface has constant mean curvature in the 
surrounding space-time. Making this assumption renders the system much 
simpler to solve. Indeed, Equation \eqref{momentum} becomes 
$L^* L W = 0$. Therefore, $W$ can be chosen to be an arbitrary conformal 
Killing vector field and it remains only to solve the Lichnerowicz 
equation \eqref{hamiltonian}.

Hence much work has been devoted to the study of the cases when 
$\tau$ is constant or when $d \tau$ small. We refer the reader to 
\cite{BartnikIsenberg} and references therein for more details. 
However, as proven in \cite{Bartnik} and \cite{ChruscielIsenbergPollack}
such a Cauchy surface of constant mean curvature does not exist for all
space-times.

Recent new results were obtained for the case of arbitrary $\tau$ by 
M.~Holst, G.~Nagy, G.~Tsogtgerel in \cite{HNT1, HNT2}, by D.~Maxwell in 
\cite{MaxwellNonCMC}, by the authors in \cite{DahlGicquaudHumbert}, 
and by the second author with A.~Sakovich in \cite{GicquaudSakovich}. 
In \cite{MaxwellModel}, D.~Maxwell studied a simplified model problem 
to get insight into the solvability of the system \eqref{eqOriginal}.

Non-existence results for the system \eqref{eqOriginal} are very rare, 
see for example \cite[Theorem~2]{IsenbergOMurchadha} and
\cite[Theorem~1.7]{DahlGicquaudHumbert} where $\sigma \equiv 0$ is assumed. 
The purpose of this short note is to provide an example 
of a system very similar to the original system \eqref{eqOriginal} which 
does not admit any solutions as soon as a certain parameter is larger 
than a fairly explicit constant. We will prove the following theorem.

\begin{theorem} \label{thmMain}
Let $(M, g)$ be a closed Riemannian manifold of dimension $3 \leq n \leq 5$ 
with $g \in C^2$ having scalar curvature $R \leq 0$, $R \not\equiv 0$. Let
$\tau$ be a positive $L^\infty$ function and $\sigma \in L^N$ a symmetric
traceless divergence-free 2-tensor on $M$. Assume that $\xi$ is a Lipschitz
1-form which does not vanish anywhere on $M$. Then there is a constant $a_0$
such that there does not exist any solution to the system
\begin{subequations}\label{eqModified}
\begin{align}
\frac{4(n-1)}{n-2} \Delta \phi + R \phi 
&= 
-\frac{n-1}{n} \tau^2 \phi^{N-1} + |\sigma + LW|^2 \phi^{-N-1},
 \label{hamiltonian2} \\
-\frac{1}{2} L^* L W 
&= 
a \phi^N \xi,
 \label{momentum2}
\end{align}
\end{subequations} 
when $a > a_0$. 

The constant $a_0$ depends on a Sobolev constant, a 
constant appearing in a Schauder estimate, $\max |\xi|$, $\min |\xi|$, 
$\|L\xi\|_{L^2}$, $\|L\xi\|_{L^\infty}$, $\|\tau\|_{L^{2N}}$, $\min \tau$, 
$\|R\|_{L^n}$, $\|\sigma\|_{L^N}$ and on $\|\phi_-\|_{L^N}$, where 
$\phi_- > 0$ is the solution of the prescribed scalar curvature equation
\[
\frac{4(n-1)}{n-2} \Delta \phi + R \phi = -\frac{n-1}{n} \tau^2 \phi^{N-1}.
\]
\end{theorem}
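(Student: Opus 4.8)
The plan is to argue by contradiction: assuming $(\phi, W)$ solves \eqref{eqModified}, I will produce a lower bound and an a priori upper bound on $\|LW\|_{L^2}$ that cannot both hold once $a$ is large. The first ingredient is the comparison $\phi \geq \phi_-$. Since $R \leq 0$, $R \not\equiv 0$ and $\tau > 0$, the conformal class of $g$ has negative Yamabe invariant, so the prescribed scalar curvature equation has a unique positive solution $\phi_-$; with $g \in C^2$ and $\tau \in L^\infty$ elliptic regularity gives $\phi_- \in C^{1,\alpha}$, hence $0 < \min\phi_- \leq \phi_- \leq \max\phi_- < \infty$. Since $\phi_-$ solves the source-free equation it is a (global) subsolution of \eqref{hamiltonian2}, while writing \eqref{hamiltonian2} as
\[
\frac{4(n-1)}{n-2}\Delta\phi + R\phi + \frac{n-1}{n}\tau^2\phi^{N-1} = |\sigma+LW|^2\phi^{-N-1} \geq 0
\]
exhibits $\phi$ as a supersolution, so a standard comparison argument for the Lichnerowicz equation gives $\phi \geq \phi_-$ on $M$. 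Two consequences will be used throughout: $\|\phi\|_{L^N} \geq \|\phi_-\|_{L^N}$, and $|\sigma+LW|^2\phi^{-N-1} \leq (\min\phi_-)^{-N-1}|\sigma+LW|^2 \in L^{N/2}$, which tames the otherwise critical source in \eqref{hamiltonian2}.

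For the lower bound, pair \eqref{momentum2} with the $1$-form $\xi$ and use the definition of $L^*$ to get $\langle LW, L\xi\rangle_{L^2} = -2a\int_M \phi^N|\xi|^2\,dv$; in particular $\xi$ cannot be conformal Killing, since then the right-hand side would be nonzero while the left vanishes. By Cauchy--Schwarz and $\phi \geq \phi_-$,
\[
\|LW\|_{L^2}\,\|L\xi\|_{L^2} \geq 2a\,(\min|\xi|)^2\,\|\phi_-\|_{L^N}^N,
\qquad\text{hence}\qquad
\|LW\|_{L^2} \geq c_1 a
\]
with $c_1 \definedas 2(\min|\xi|)^2\|\phi_-\|_{L^N}^N/\|L\xi\|_{L^2} > 0$.

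The heart of the argument is the opposite estimate. Multiplying \eqref{hamiltonian2} by $\phi^{N+1}$ and integrating, the Laplacian term is nonnegative, and since $(N+2)\tfrac{n}{n-1} = 2N$, Hölder together with Young's inequality absorbs the scalar curvature term and yields $\|\phi\|_{L^{2N}}^{2N} \leq C(1 + \|\sigma+LW\|_{L^2}^2)$, with $C$ depending only on $\min\tau$, $\|\tau\|_{L^{2N}}$ and $\|R\|_{L^n}$; thus $\|\phi\|_{L^{2N}}^N \leq C'(1 + \|LW\|_{L^2})$. I would then run a regularity bootstrap: the source of \eqref{hamiltonian2}, controlled by $(\min\phi_-)^{-N-1}|\sigma+LW|^2 \in L^{N/2}$, lets Moser iteration on the Lichnerowicz equation raise the integrability of $\phi$; the $W^{2,p}$/Schauder estimate for $L^*L$ applied to \eqref{momentum2} — the source of the Schauder constant and of the dependence on $\max|\xi|$ and $\|L\xi\|_{L^\infty}$ — then bounds $\|LW\|_{L^\infty}$, and feeding this back through the displayed Lichnerowicz estimate (together with the Sobolev constant) produces a bound $\|LW\|_{L^2} \leq C_0$ with $C_0$ depending only on the data listed in the statement, and not on $a$. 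Combined with the lower bound this forces $c_1 a \leq C_0$, so \eqref{eqModified} admits no solution once $a > a_0 \definedas C_0/c_1$; the amount of integrability one can shuttle back and forth between the two equations is precisely what confines the argument to $3 \leq n \leq 5$.

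The main obstacle is exactly this a priori bound. The two equations are coupled at the critical scaling $|\sigma+LW| \sim \tau\,\phi^N$, so a naive treatment only gives $\|LW\|_{L^2} \leq C a\,\|\phi\|_{L^{2N}}^N$ and $\|\phi\|_{L^{2N}}^N \leq C'(1+\|LW\|_{L^2})$, whose composition $\|LW\|_{L^2} \leq CC' a(1+\|LW\|_{L^2})$ is vacuous as soon as $CC'a \geq 1$. Upgrading this to a genuine $a$-independent control on the size of a hypothetical solution requires using the favourable sign and exponent of the $\tfrac{n-1}{n}\tau^2\phi^{N-1}$ term against the $|\sigma+LW|^2\phi^{-N-1}$ source at the level of the sharp Sobolev and Schauder inequalities, while carefully tracking how the quantitative input $\phi \geq \phi_-$ enters — which is why $a_0$ must ultimately depend on all of the quantities named in the theorem.
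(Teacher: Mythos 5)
There is a genuine gap at the crux of your argument. Your strategy is sound up to and including the lower bound $\|LW\|_{L^2}\geq c_1 a$ (this is exactly the paper's Claim \ref{cLowerBoundEnergy}, obtained by pairing \eqref{momentum2} with $\xi$ and using $\phi\geq\phi_-$), and the estimate $\|\phi\|_{L^{2N}}^{2N}\leq C(1+\|\sigma+LW\|_{L^2}^2)$ from multiplying \eqref{hamiltonian2} by $\phi^{N+1}$ is also the computation behind the paper's Claim \ref{cL2NBound}. But the contradiction you aim for rests entirely on an $a$-independent a priori bound $\|LW\|_{L^2}\leq C_0$, and this bound is never established: the ``regularity bootstrap'' is only named, not carried out, and you yourself show that its natural implementation closes into the vacuous inequality $\|LW\|_{L^2}\leq CC'a(1+\|LW\|_{L^2})$. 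The concluding appeal to ``the favourable sign and exponent of the $\tau^2\phi^{N-1}$ term at the level of the sharp Sobolev and Schauder inequalities'' is not an argument; every route through the Schauder estimate for $L^*L$ reintroduces the factor $a$ from the source $a\phi^N\xi$, so the circularity is structural, not an artifact of a naive first pass. Indeed nothing in the paper suggests that $\|LW\|_{L^2}$ stays bounded; on the contrary, the only information ever extracted about the energy $\gamma=\int_M|\sigma+LW|^2\,d\mu$ is the lower bound $\gamma\geq c_1a^2$.

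The paper escapes the circularity by never seeking an upper bound on $\gamma$ at all. It normalizes $\phitil=\gamma^{-1/(2N)}\phi$, $\Wtil=\gamma^{-1/2}W$, $\sigmatil=\gamma^{-1/2}\sigma$, so that $\int_M|\sigmatil+L\Wtil|^2\,d\mu=1$ by construction, and then shows that various \emph{weighted} versions of this unit quantity decay in $a$: pairing the vector equation with $\xi$ and combining with the integrated Lichnerowicz equation gives $\int_M\phitil^N\,d\mu\lesssim a^{-(N+1)}$ and $\int_M|\sigmatil+L\Wtil|^2\phitil^{-N-1}\,d\mu\lesssim a^{-(N+1)(N-1)/N}$; the $\phi^{N+1}$-multiplier argument gives $\int_M\phitil^{2N}\,d\mu\leq c_4$ uniformly, whence elliptic regularity gives $\|L\Wtil\|_{L^N}\lesssim a$. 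The contradiction then comes from the interpolation
\[
1=\left(\int_M|\sigmatil+L\Wtil|^2\,d\mu\right)^2
\leq\int_M|\sigmatil+L\Wtil|^2\phitil^{N-2}\,d\mu
\int_M|\sigmatil+L\Wtil|^2\phitil^{-(N-2)}\,d\mu
\leq c_6c_7\,a^{6-2N},
\]
where both factors are controlled by H\"older from the estimates above, and $6-2N<0$ exactly when $3\leq n\leq 5$ (this, rather than a bootstrap running out of integrability, is where the dimension restriction enters). To complete your proof you would either have to supply the missing $a$-independent bound on $\|LW\|_{L^2}$ — which there is no evidence can be done — or switch to a normalization-plus-interpolation scheme of this kind.
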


By the assumption on scalar curvature the metric $g$ has negative Yamabe
constant which guarantees the existence of the function $\phi_-$. 

Note that the only difference between the modified system 
\eqref{eqModified} and the original system \eqref{eqOriginal} is that 
$\frac{n-1}{n} d\tau$ in \eqref{momentum} is replaced by $a \xi$. The 
assumption that $\xi$ never vanishes imposes that the compact manifold 
$M$ has zero Euler characteristic. Note also that for any function 
$\tau$ the 1-form $d\tau$ has zeros. 

This theorem shows that if a general existence result (that is without
additional assumption on $\tau$ and $\sigma$) for the system
\eqref{eqOriginal} exists, then its proof must use the fact that $d\tau$
(appearing in \eqref{momentum}) is the differential of $\tau$ and not 
an arbitrary 1-form. Even more, this seems to indicate that such a general
statement could be false.

The theorem stated here is certainly not in the most general form 
possible, our goal is simply to find an example of a non-existence result 
for the generalized system \eqref{eqModified}.

\subsection*{Acknowledgements}

E. Humbert was partially supported by ANR-10-BLAN 0105.

\section{Proof of Theorem \ref{thmMain}}

Assume that ($\phi$, $W$) solves the system \eqref{eqModified}. We define 
\[
\gamma \definedas \int_M \left|\sigma + L W\right|^2 \, d\mu
\] 
and rescale Equations \eqref{hamiltonian2}-\eqref{momentum2} by setting
\[
\phitil \definedas \frac{1}{\gamma^{\frac{1}{2N}}} \phi, \quad
\Wtil \definedas \frac{1}{\gamma^{\frac{1}{2}}} W, \quad
\sigmatil \definedas \frac{1}{\gamma^{\frac{1}{2}}} \sigma.
\]
The equations then become
\begin{subequations}
\begin{align}
\frac{1}{\gamma^{\frac{1}{n}}} 
\left(\frac{4(n-1)}{n-2} \Delta \phitil + R \phitil\right)
&= 
-\frac{n-1}{n} \tau^2 \phitil^{N-1} 
+\left| \sigmatil + L\Wtil \right|^2 \phitil^{-N-1},
\label{hamiltonianr} \\
-\frac{1}{2} L^* L \Wtil
&= 
a \phitil^N \xi.
\label{momentumr}
\end{align}
\end{subequations} 

The proof of Theorem \ref{thmMain} is decomposed in a sequence of claims.

\begin{claim} \label{cLowerBoundEnergy}
There exists a constant $c_1 > 0$ such that 
\[
\gamma \geq c_1 a^2.
\]
\end{claim}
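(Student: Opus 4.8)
The plan is to read the bound off the momentum equation \eqref{momentum2} (before the rescaling), testing it against the fixed nowhere-vanishing form $\xi$ and exploiting a uniform lower bound on $\int_M \phi^N\,d\mu$. Since $\xi$ is Lipschitz we have $L\xi \in L^\infty \subset L^2$, and $W$ is regular enough (by elliptic regularity for \eqref{momentum2}) that $LW \in L^2$; pairing $-\tfrac12 L^* L W = a\phi^N\xi$ with $\xi$ in $L^2$ and integrating by parts gives $-\tfrac12\int_M \langle LW, L\xi\rangle\,d\mu = a\int_M \phi^N |\xi|^2\,d\mu$. Applying Cauchy--Schwarz on the left and $|\xi| \geq \min|\xi| > 0$ on the right yields
\[
\|LW\|_{L^2}\,\|L\xi\|_{L^2} \;\geq\; \Big| \int_M \langle LW, L\xi\rangle\,d\mu \Big| \;=\; 2a\int_M \phi^N|\xi|^2\,d\mu \;\geq\; 2a\,(\min|\xi|)^2 \int_M \phi^N\,d\mu .
\]
It remains to bound $\int_M \phi^N\,d\mu$ from below.

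Since $|\sigma + LW|^2\phi^{-N-1} \geq 0$, the function $\phi$ is a supersolution of the prescribed scalar curvature equation solved by $\phi_-$, and one has the comparison $\phi \geq \phi_- > 0$. This is most cleanly seen after conformally changing to the metric $\phi_-^{N-2} g$, whose scalar curvature is exactly $-\tfrac{n-1}{n}\tau^2 < 0$: writing $\phi = \phi_- \psi$ turns \eqref{hamiltonian2} into an equation for $\psi$ in which the zeroth order coefficient carries the favorable sign, so that $\psi < 1$ is impossible at an interior minimum. Hence $\int_M \phi^N\,d\mu \geq \|\phi_-\|_{L^N}^N$, and the previous display gives $\|LW\|_{L^2} \geq C a$ with $C = 2(\min|\xi|)^2\|\phi_-\|_{L^N}^N / \|L\xi\|_{L^2}$.

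Finally, since $N > 2$ on the closed manifold $M$, a volume factor gives $\|\sigma\|_{L^2} \leq c(M)\|\sigma\|_{L^N}$, so
\[
\sqrt{\gamma} \;=\; \|\sigma + LW\|_{L^2} \;\geq\; \|LW\|_{L^2} - \|\sigma\|_{L^2} \;\geq\; C a - c(M)\|\sigma\|_{L^N}.
\]
Once $a$ exceeds an explicit threshold depending on $\min|\xi|$, $\|L\xi\|_{L^2}$, $\|\phi_-\|_{L^N}$ and $\|\sigma\|_{L^N}$ (which is harmless, since $a_0$ may be enlarged in the later steps), the linear term dominates and we obtain $\gamma \geq c_1 a^2$ with $c_1 = C^2/4$. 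The only genuinely delicate step is the comparison $\phi \geq \phi_-$: the naive maximum principle applied to $\phi_- - \phi$ fails precisely because $R \leq 0$ puts the wrong sign on the zeroth order term of the difference equation, and one must invoke the structure of the Lichnerowicz equation in the negative Yamabe setting (equivalently, the conformal normalization above). Everything else is Cauchy--Schwarz and routine estimates.
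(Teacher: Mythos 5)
Your proof is correct and follows essentially the same route as the paper: test the vector equation against $\xi$, apply Cauchy--Schwarz, and bound $\int_M \phi^N\,d\mu$ from below via the comparison $\phi \geq \phi_-$ (which the paper simply cites from an earlier work, and which your conformal-normalization sketch justifies). The one difference is that the paper observes $\int_M \langle L\xi, \sigma\rangle\,d\mu = 0$ (since $\sigma$ is traceless and divergence-free, $L^*\sigma$ vanishes), so it can run Cauchy--Schwarz directly against $\|L W + \sigma\|_{L^2} = \gamma^{1/2}$ and get the bound for every $a$, whereas you bound $\|LW\|_{L^2}$ and absorb the extra $\|\sigma\|_{L^2}$ term by enlarging $a_0$ --- harmless for the theorem, as you note, but yielding a slightly weaker form of the claim.
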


\begin{proof}
We remark that $\phi \geq \phi_-$, see for example 
\cite[Lemma 2.2]{DahlGicquaudHumbert}. Taking the scalar product of
\eqref{momentum2} with $\xi$ and integrating we get the central equality
of the following estimate,
\[\begin{split}
\frac{1}{2} \left(\int_M \left|L\xi\right|^2 \, d\mu\right)^{\frac{1}{2}} 
\left(\int_M \left|LW + \sigma\right|^2 \, d\mu\right)^{\frac{1}{2}}
&\geq
-\frac{1}{2} \int_M \<L\xi, LW + \sigma\> \, d\mu \\ 
&=
-\frac{1}{2} \int_M \<L\xi, LW\> \, d\mu \\
&= 
a \int_M \phi^N \left|\xi\right|^2 \, d\mu \\
&\geq 
a \left(\inf_M |\xi|^2\right) \int_M \phi^N \, d\mu \\
&\geq 
a \left(\inf_M |\xi|^2\right) \int_M \phi_-^N \, d\mu,
\end{split}\]
from which we conclude
\[
\gamma^{\frac{1}{2}} \geq 
2 \frac{\left(\inf_M |\xi|^2\right) \int_M \phi_-^N \, d\mu}
{\left(\int_M \left|L\xi\right|^2 \, d\mu\right)^{\frac{1}{2}}} a.
\]
The claim thus holds with
\[
c_1 \definedas
4 \frac{\left(\inf_M |\xi|^2\right)^2 
\left( \int_M \phi_-^N \, d\mu \right)^2}
{\int_M \left|L\xi\right|^2 \, d\mu} .
\]
\end{proof}

\begin{claim} \label{cLNBound}
There exists a constant $c_2$ such that
\[
\int_M \phitil^N \, d\mu \leq \frac{c_2}{a^{N+1}}.
\]
\end{claim}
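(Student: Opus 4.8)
The plan is to combine the rescaled equations \eqref{hamiltonianr} and \eqref{momentumr} with the lower bound $\gamma \geq c_1 a^2$ from Claim \ref{cLowerBoundEnergy}, extracting a bound on $\int_M \phitil^N \, d\mu$ in terms of $a$. First I would integrate the Lichnerowicz equation in the form \eqref{hamiltonianr} over $M$. Since $\int_M \Delta \phitil \, d\mu = 0$ and $R \leq 0$, we get
\[
\frac{n-1}{n} \int_M \tau^2 \phitil^{N-1} \, d\mu
+ \frac{1}{\gamma^{1/n}} \int_M (-R) \phitil \, d\mu
=
\int_M \left|\sigmatil + L\Wtil\right|^2 \phitil^{-N-1} \, d\mu
- \frac{1}{\gamma^{1/n}} \frac{4(n-1)}{n-2}\int_M \Delta\phitil\,d\mu,
\]
so that (dropping the vanishing Laplacian term and the nonnegative $-R$ term on the left)
\[
\frac{n-1}{n} \left(\inf_M \tau^2\right) \int_M \phitil^{N-1} \, d\mu
\leq
\int_M \left|\sigmatil + L\Wtil\right|^2 \phitil^{-N-1} \, d\mu.
\]
This is not yet what we want; the right-hand side involves $\phitil^{-N-1}$ rather than a power of $\gamma$. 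The key observation is that by construction $\int_M |\sigmatil + L\Wtil|^2 \, d\mu = 1$, so the right-hand side is controlled once we have a lower bound on $\phitil$. Since $\phi \geq \phi_-$ we have $\phitil \geq \gamma^{-1/(2N)} \phi_-$, hence $\phitil^{-N-1} \leq \gamma^{(N+1)/(2N)} \phi_-^{-N-1}$; but this blows up in $\gamma$, so integrating \eqref{hamiltonianr} alone is the wrong route.

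Instead I would test the rescaled momentum equation \eqref{momentumr} against $\Wtil$ and then feed the result into a Hölder/Sobolev estimate. Pairing \eqref{momentumr} with $\Wtil$ gives $\frac12 \int_M |L\Wtil|^2 \, d\mu = a \int_M \phitil^N \<\xi, \Wtil\> \, d\mu$. On the other hand, testing \eqref{momentumr} against $\xi$ itself and arguing as in Claim \ref{cLowerBoundEnergy} (but now with the tilded, normalized tensor, so that $\|\sigmatil + L\Wtil\|_{L^2} = 1$) yields
\[
a \int_M \phitil^N |\xi|^2 \, d\mu
= -\tfrac12 \int_M \<L\xi, L\Wtil\> \, d\mu
\leq \tfrac12 \|L\xi\|_{L^2} \|L\Wtil\|_{L^2}
\leq \tfrac12 \|L\xi\|_{L^2},
\]
since $\|L\Wtil\|_{L^2} \leq \|\sigmatil + L\Wtil\|_{L^2} + \|\sigmatil\|_{L^2} \leq 1 + \gamma^{-1/2}\|\sigma\|_{L^2}$, and one checks $\gamma^{-1/2}\|\sigma\|_{L^2}$ is bounded (indeed small) using $\gamma \geq c_1 a^2$ for $a$ large; more simply one can bound $\|L\Wtil\|_{L^2}^2 = \int \<L\Wtil, L\Wtil\> = \int \<L\Wtil, \sigmatil + L\Wtil\> - \int\<L\Wtil,\sigmatil\>$, giving $\|L\Wtil\|_{L^2} \le 1 + \|\sigmatil\|_{L^2}$ directly. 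Using $|\xi|^2 \geq \inf_M |\xi|^2 > 0$ we obtain
\[
\int_M \phitil^N \, d\mu
\leq \frac{\|L\xi\|_{L^2}\left(1 + \gamma^{-1/2}\|\sigma\|_{L^2}\right)}{2 a \inf_M |\xi|^2}.
\]
Finally, invoking $\gamma \geq c_1 a^2$ to absorb the $\gamma^{-1/2}$ term into a constant (for $a \geq 1$, say), this is bounded by $c_2' / a$ for a suitable constant.

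The remaining point is to upgrade the exponent of $a$ from $1$ to $N+1$, which the statement demands. The extra decay must come from the $\gamma^{-1/n}$ prefactor in the Lichnerowicz equation \eqref{hamiltonianr}: a more careful argument should test \eqref{hamiltonianr} against a suitable power of $\phitil$ (or against $1$) and use that $\gamma^{-1/n} \leq (c_1 a^2)^{-1/n}$ contributes additional negative powers of $a$, while the term $\int |\sigmatil + L\Wtil|^2 \phitil^{-N-1}$ on the right is controlled via $\phitil \geq \gamma^{-1/(2N)}\phi_-$ and the normalization $\int|\sigmatil + L\Wtil|^2 = 1$, trading powers of $\gamma$ against powers of $a$ through Claim \ref{cLowerBoundEnergy}. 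The main obstacle I anticipate is bookkeeping the competing powers of $\gamma$ and $a$ so that they combine to give exactly $a^{-(N+1)}$: the lower bound $\phitil \gtrsim \gamma^{-1/(2N)}\phi_-$ pushes in the unfavorable direction, so one must be careful to extract the decay from the $\gamma^{-1/n}$ factor (and possibly from iterating with the $L^N$ bound just obtained) rather than losing it. I expect the clean route is: first establish the crude bound $\int_M \phitil^N \le c/a$ as above, then re-insert this into the integrated Lichnerowicz equation together with $\gamma^{-1/n} \le (c_1 a^2)^{-1/n}$ to bootstrap up to the claimed power $N+1$.
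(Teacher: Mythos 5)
Your argument, as written, establishes only $\int_M \phitil^N\,d\mu \leq C/a$, which is genuinely weaker than the claimed bound $c_2\,a^{-(N+1)}$, and the bootstrap you sketch to upgrade the exponent does not close. There are two problems with it. First, the extra decay does not come from the $\gamma^{-1/n}$ prefactor: in the integrated Lichnerowicz equation that prefactor multiplies $\int_M\Delta\phitil\,d\mu=0$ and $\int_M R\phitil\,d\mu\leq 0$, both of which are simply discarded, so it contributes nothing to this claim (it is exploited only later, in Claim \ref{cL2NBound}). Second, your momentum-equation estimate bounds $a\int_M\phitil^N|\xi|^2\,d\mu$ by $\tfrac12\|L\xi\|_{L^2}\|L\Wtil\|_{L^2}\leq C$, a quantity containing no trace of $\phitil$; re-inserting the improved integrated Lichnerowicz bound therefore produces no improvement on $\int_M\phitil^N\,d\mu$, so there is no feedback loop to iterate.

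The missing idea is a \emph{weighted} Cauchy--Schwarz inequality in the pairing of \eqref{momentumr} with $\xi$. Bounding $-\tfrac12\int_M\langle L\xi,L\Wtil+\sigmatil\rangle\,d\mu$ by $\tfrac12\|L\xi\|_{L^\infty}\int_M|L\Wtil+\sigmatil|\,d\mu$ and writing $|L\Wtil+\sigmatil|=|L\Wtil+\sigmatil|\phitil^{-N/2}\cdot\phitil^{N/2}$, one gets
\[
\int_M |L\Wtil+\sigmatil|\,d\mu
\leq
\Bigl(\int_M |L\Wtil+\sigmatil|^2\phitil^{-N}\,d\mu\Bigr)^{\frac12}
\Bigl(\int_M \phitil^{N}\,d\mu\Bigr)^{\frac12}.
\]
A further H\"older step, using the normalization $\int_M|L\Wtil+\sigmatil|^2\,d\mu=1$, replaces the weight $\phitil^{-N}$ by $\phitil^{-N-1}$ at the cost of the exponent $\frac{N}{N+1}$, and the resulting quantity $\int_M|L\Wtil+\sigmatil|^2\phitil^{-N-1}\,d\mu$ is controlled by $C\bigl(\int_M\phitil^N\,d\mu\bigr)^{(N-1)/N}$ via the integrated Lichnerowicz equation (the step you already have, with $\inf\tau^2$ replaced by a H\"older bound using $\|\tau\|_{L^{2N}}$). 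Setting $X=\int_M\phitil^N\,d\mu$, this yields the self-improving inequality $X^{1/2}\leq \frac{C}{a}\,X^{\frac{N-1}{2(N+1)}}$, i.e.\ $X^{\frac{1}{N+1}}\leq \frac{C}{a}$, which is exactly the claimed $X\leq c_2\,a^{-(N+1)}$. Your crude bound $X\leq C/a$ is correct but cannot be bootstrapped to this without that weighted estimate.
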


\begin{proof}
We take the scalar product of \eqref{momentumr} with $\xi$ and integrate 
over $M$ to find
\[ \begin{split}
&\frac{1}{2} \left\|L\xi\right\|_{L^\infty} 
\left(
\int_M \left|L\Wtil+\sigmatil\right|^2 \phitil^{-N} \, d\mu 
\right)^{\frac{1}{2}} 
\left(\int_M \phitil^N \, d\mu\right)^{\frac{1}{2}} \\
&\qquad \geq 
\frac{1}{2} \left\|L\xi\right\|_{L^\infty} \int_M 
\left|L\Wtil+\sigmatil\right| \, d\mu \\
&\qquad \geq 
-\frac{1}{2} \int_M \<L\xi, L\Wtil+\sigmatil\> \, d\mu \\
&\qquad =
-\frac{1}{2} \int_M \<L\xi, L\Wtil\> \, d\mu \\
&\qquad = 
a \int_M \phitil^N \left|\xi\right|^2 \, d\mu \\
&\qquad \geq 
a \left(\inf_M |\xi|^2\right) \int_M \phitil^N \, d\mu
\end{split} \]
or
\[
\frac{1}{2} \left\|L\xi\right\|_{L^\infty}
\left(
\int_M \left|L\Wtil+\sigmatil\right|^2 \phitil^{-N} \, d\mu
\right)^{\frac{1}{2}} 
\geq 
a \left(
\inf_M |\xi|^2\right) \left(\int_M \phitil^N \, d\mu
\right)^{\frac{1}{2}}.
\]
%
%
From the H\"older inequality we get
\[ \begin{split}
&\int_M \left|L\Wtil+\sigmatil\right|^2 \phitil^{-N} \, d\mu \\
&\qquad \leq 
\left(
\int_M \left|L\Wtil+\sigmatil\right|^2 \phitil^{-N-1} \, d\mu
\right)^{\frac{N}{N+1}}
\left( 
\int_M \left|L\Wtil+\sigmatil\right|^2 \, d\mu 
\right)^{\frac{1}{N+1}} \\
&\qquad = 
\left(
\int_M \left|L\Wtil+\sigmatil\right|^2 \phitil^{-N-1} \, d\mu
\right)^{\frac{N}{N+1}},
\end{split} \]
and together with the previous estimate we have found that
\begin{equation} \label{eq1}
\left(\int_M \phitil^N \, d\mu\right)^{\frac{1}{2}} 
\leq 
\frac{1}{2 a} 
\frac{\left\|L\xi\right\|_{L^\infty}}{\left(\inf_M |\xi|^2\right)} 
\left(
\int_M \left|L\Wtil+\sigmatil\right|^2 \phitil^{-N-1} \, d\mu
\right)^{\frac{N}{2(N+1)}}.
\end{equation}
To estimate the right-hand side we integrate Equation \eqref{hamiltonianr},
\[
\frac{1}{\gamma^{\frac{1}{n}}} \int_M R \phitil \, d\mu
+ \frac{n-1}{n} \int_M \tau^2 \phitil^{N-1} \, d\mu
= \int_M \left|\sigmatil + L\Wtil\right|^2 \phitil^{-N-1} \, d\mu.
\]
Since $R \leq 0$ we see that
\begin{equation} \label{eq2}
\begin{aligned}
\int_M \left|\sigmatil + L\Wtil\right|^2 \phitil^{-N-1} \, d\mu
&\leq 
\frac{n-1}{n} \int_M \tau^2 \phitil^{N-1} \, d\mu\\
&\leq 
\frac{n-1}{n} \left(\int_M \tau^{2N} \, d\mu\right)^{\frac{1}{N}} 
\left(\int_M \phitil^N \, d\mu\right)^{\frac{N-1}{N}}.
\end{aligned}
\end{equation}
Inserting this estimate into \eqref{eq1} we get
\[ \begin{split}
\left(\int_M \phitil^N \, d\mu\right)^{\frac{1}{2}}
&\leq 
\frac{1}{2 a} \left(\frac{n-1}{n}\right)^{\frac{N}{2(N+1)}} 
\frac{\left\|L\xi\right\|_{L^\infty}}{\left(\inf_M |\xi|^2\right)} 
\left(\int_M \tau^{2N} \, d\mu\right)^{\frac{1}{2(N+1)}} \\
&\qquad 
\cdot \left(\int_M \phitil^N \, d\mu\right)^{\frac{N-1}{2(N+1)}},
\end{split} \]
or
\[
\int_M \phitil^N \, d\mu
\leq 
\frac{1}{(2 a)^{N+1}} \left(\frac{n-1}{n}\right)^{\frac{N}{2}} 
\left(\frac{\left\|L\xi\right\|_{L^\infty}}
{\left(\inf_M |\xi|^2\right)}\right)^{N+1} 
\left(\int_M \tau^{2N} \, d\mu\right)^{\frac{1}{2}},
\]
which proves the claim with 
\[
c_2 \definedas
\frac{1}{2^{N+1}} 
\left(\frac{n-1}{n}\right)^{\frac{N}{2}} 
\left(
\frac{\left\|L\xi\right\|_{L^\infty}}{\left(\inf_M |\xi|^2\right)}
\right)^{N+1} 
\left( \int_M \tau^{2N} \, d\mu \right)^{\frac{1}{2}}.
\]
\end{proof}

As a corollary, plugging Claim \ref{cLNBound} into \eqref{eq2}, we get
\begin{claim} \label{cMixedBound}
There exists a constant $c_3$ such that
\[
\int_M \left|\sigmatil + L\Wtil\right|^2 \phitil^{-N-1} \, d\mu
\leq \frac{c_3}{a^{\frac{(N+1)(N-1)}{N}}}.
\]
\end{claim}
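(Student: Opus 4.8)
The plan is simply to combine the two inequalities that have already been established. Inequality \eqref{eq2} — obtained by integrating the rescaled Hamiltonian equation \eqref{hamiltonianr}, discarding the $\gamma^{-1/n}\!\int_M R\phitil\,d\mu$ term via the hypothesis $R\le 0$, and then applying H\"older's inequality to $\int_M \tau^2\phitil^{N-1}\,d\mu$ — reads
\[
\int_M \left|\sigmatil + L\Wtil\right|^2 \phitil^{-N-1} \, d\mu
\leq
\frac{n-1}{n} \left(\int_M \tau^{2N} \, d\mu\right)^{\frac{1}{N}}
\left(\int_M \phitil^N \, d\mu\right)^{\frac{N-1}{N}}.
\]
Now I would feed in the $L^N$ bound from Claim \ref{cLNBound}, namely $\int_M \phitil^N\,d\mu \le c_2 a^{-(N+1)}$. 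Raising this to the power $\frac{N-1}{N}$ produces a factor $a^{-(N+1)\frac{N-1}{N}}$, and the exponent $(N+1)\frac{N-1}{N}$ is exactly $\frac{(N+1)(N-1)}{N}$, which is the power of $a$ claimed. Hence the estimate holds with
\[
c_3 \definedas \frac{n-1}{n} \left(\int_M \tau^{2N} \, d\mu\right)^{\frac{1}{N}} c_2^{\frac{N-1}{N}},
\]
a constant independent of $a$ since $c_2$ is (it depends only on $n$, $\|L\xi\|_{L^\infty}$, $\inf_M|\xi|$, $\|\tau\|_{L^{2N}}$, and, through $\phi_-$, on the data).

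There is essentially no obstacle here: the claim is a one-line corollary, and the only thing worth checking carefully is the arithmetic of the exponents and the fact — already noted above — that the $\gamma$-dependent term in \eqref{hamiltonianr} drops out under $R\le 0$ so that $c_3$ genuinely does not involve $\gamma$ (and in particular does not involve $a$ indirectly through $\gamma$). I would then state the claim as I have, recording $c_3$ explicitly so that it can be tracked in the final contradiction, where Claims \ref{cLowerBoundEnergy}, \ref{cLNBound}, and \ref{cMixedBound} are presumably played off against each other to force $a\le a_0$.
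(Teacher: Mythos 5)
Your proposal is correct and is exactly what the paper does: the paper presents Claim \ref{cMixedBound} as an immediate corollary obtained by ``plugging Claim \ref{cLNBound} into \eqref{eq2}.'' The exponent arithmetic and the explicit constant $c_3$ you record are both right.
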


\begin{claim} \label{cL2NBound}
If $a$ is large enough there exists a constant $c_4$ such that
\[
\int_M \phitil^{2N} \, d\mu \leq c_4.
\]
\end{claim}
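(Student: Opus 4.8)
The plan is to test the rescaled Lichnerowicz equation~\eqref{hamiltonianr} against $\phitil^{N+1}$. The power $N+1$ is chosen precisely so that, after multiplying \eqref{hamiltonianr} by $\gamma^{\frac1n}\phitil^{N+1}$, the nonlinear term $|\sigmatil+L\Wtil|^2\phitil^{-N-1}$ becomes simply $|\sigmatil+L\Wtil|^2$, whose integral over $M$ equals $1$: indeed $\sigmatil+L\Wtil=\gamma^{-\frac12}(\sigma+LW)$, so $\int_M|\sigmatil+L\Wtil|^2\,d\mu=\gamma^{-1}\int_M|\sigma+LW|^2\,d\mu=1$, using $\gamma>0$ (which holds by Claim~\ref{cLowerBoundEnergy} as $a>0$). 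Integrating by parts in the Laplacian term, this produces the identity
\[
\frac{4(n-1)(N+1)}{n-2}\int_M\phitil^{N}|\nabla\phitil|^2\,d\mu+\int_M R\,\phitil^{N+2}\,d\mu+\frac{n-1}{n}\,\gamma^{\frac1n}\int_M\tau^2\phitil^{2N}\,d\mu=\gamma^{\frac1n}.
\]

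Next I would estimate the three terms on the left. The first is nonnegative, so it is discarded. Writing $X\definedas\int_M\phitil^{2N}\,d\mu$ and $B\definedas\frac{n-1}{n}(\min\tau)^2>0$, the third term is bounded below by $B\,\gamma^{\frac1n}X$. For the scalar curvature term I would use $R\le0$ together with the Hölder inequality and the algebraic identity $(N+2)\cdot\frac{n}{n-1}=2N$ (this is precisely where $N=\frac{2n}{n-2}$ enters), giving
\[
\int_M R\,\phitil^{N+2}\,d\mu=-\int_M|R|\,\phitil^{N+2}\,d\mu\ge-\|R\|_{L^n}\Big(\int_M\phitil^{2N}\,d\mu\Big)^{\frac{n-1}{n}}=-\|R\|_{L^n}\,X^{\frac{n-1}{n}}.
\]
Together with the identity above this yields $B\,\gamma^{\frac1n}X\le\gamma^{\frac1n}+\|R\|_{L^n}X^{\frac{n-1}{n}}$, that is,
\[
B\,X\le1+\frac{\|R\|_{L^n}}{\gamma^{\frac1n}}\,X^{\frac{n-1}{n}}.
\]

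To finish I would invoke Claim~\ref{cLowerBoundEnergy}: since $\gamma\ge c_1a^2$, we have $\gamma^{\frac1n}\to\infty$ as $a\to\infty$, so for $a$ sufficiently large $\|R\|_{L^n}/\gamma^{\frac1n}\le B/2$. As $\frac{n-1}{n}<1$, the quantity $X^{\frac{n-1}{n}}$ is sublinear in $X$, so if $X\ge1$ then $X^{\frac{n-1}{n}}\le X$ and the last display forces $\frac{B}{2}X\le1$; in all cases $X\le\max\{1,\,2/B\}$, which gives the claim with $c_4\definedas\max\{1,\,\frac{2n}{(n-1)(\min\tau)^2}\}$.

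The routine points to verify are that the rescaling is legitimate ($\gamma>0$, handled above), that $\phitil\in L^\infty$ so that $X<\infty$ a priori and the integration by parts is justified — this follows from standard elliptic regularity under the hypotheses on $g,\tau,\sigma,\xi$ — and the elementary algebra of the exponents. The actual crux is the absorption step in the last display: it succeeds only because the unfavourable scalar-curvature contribution scales like $X^{(n-1)/n}$, strictly sublinear in $X$, whereas its coefficient $\|R\|_{L^n}/\gamma^{\frac1n}$ is forced to zero by the lower bound $\gamma\ge c_1a^2$ of Claim~\ref{cLowerBoundEnergy}. This is exactly why the conclusion requires $a$ to be large.
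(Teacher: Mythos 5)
Your proof is correct and follows essentially the same route as the paper: test \eqref{hamiltonianr} against $\phitil^{N+1}$, use the normalization $\int_M|\sigmatil+L\Wtil|^2\,d\mu=1$, discard the nonnegative gradient term, and absorb the scalar curvature contribution using $\gamma\ge c_1a^2$ from Claim~\ref{cLowerBoundEnergy}. The only (immaterial) difference is that you treat the $R$-term via H\"older with exponents $(n,\tfrac{n}{n-1})$ plus sublinearity of $X^{(n-1)/n}$, whereas the paper uses the pointwise Young inequality $|R|\,\phitil^{N+2}\le\tfrac{n-1}{n}\phitil^{2N}+\tfrac1n|R|^{n}$; both give the same absorption and a comparable constant $c_4$.
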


\begin{proof}
We multiply \eqref{hamiltonianr} by $\phitil^{N+1}$ and integrate over $M$,
\[ \begin{split}
&
\frac{1}{\gamma^{\frac{1}{n}}} \int_M \left(\frac{4(n-1)}{n-2} 
\frac{N+1}{\left(\frac{N}{2}+1\right)^2} 
\left|d\phitil^{\frac{N}{2}+1}\right|^2 + R \phitil^{N+2}\right) \, d\mu \\
&\qquad =
\frac{1}{\gamma^{\frac{1}{n}}} \int_M 
\left(\frac{4(n-1)}{n-2} \phitil^{N+1}\Delta \phitil + R \phitil^{N+2}\right) 
\, d\mu \\
&\qquad =
-\frac{n-1}{n} \int_M \tau^2 \phitil^{2N} \, d\mu 
+ \int_M \left|\sigmatil + L\Wtil\right|^2 \, d\mu \\
&\qquad =
-\frac{n-1}{n} \int_M \tau^2 \phitil^{2N} \, d\mu + 1,
\end{split} \]
from which we conclude that
\[
\frac{1}{\gamma^{\frac{1}{n}}}\int_M R \phitil^{N+2} \, d\mu 
+ \frac{n-1}{n} (\inf_M \tau)^2 \int_M \phitil^{2N} \, d\mu
\leq 1.
\]
Using Young's inequality we infer
\[ \begin{split}
\left|\int_M R \phitil^{N+2} \, d\mu\right| 
&\leq 
\frac{N+2}{2N} \int_M \phitil^{2N} \, d\mu 
+ \frac{N-2}{2N} \int_M |R|^{\frac{2N}{N-2}} \, d\mu \\
&=
\frac{n-1}{n} \int_M \phitil^{2N} \, d\mu 
+ \frac{1}{n} \int_M |R|^{\frac{2N}{N-2}} \, d\mu,
\end{split} \]
so
\[
\frac{n-1}{n} \left((\inf_M \tau)^2 - \frac{1}{\gamma^{\frac{1}{n}}}\right) 
\int_M \phitil^{2N} \, d\mu
\leq 
1 + \frac{1}{n \gamma^{\frac{1}{n}}} \int_M |R|^{\frac{2N}{N-2}} \, d\mu.
\]
Assuming that $\gamma^{-\frac{1}{n}} \leq \frac{1}{2} (\inf_M \tau)^2$
which by Claim \ref{cLowerBoundEnergy} holds for large enough $a$
the claim follows with
\[
c_4 \definedas 
\frac{2n}{n-1} \frac{1}{(\inf_M \tau)^2} 
\left( 1 + \frac{2}{n (\inf_M \tau)^2} 
\int_M |R|^{\frac{2N}{N-2}} \, d\mu \right).
\]
\end{proof}

\begin{claim}\label{cLNBoundLW}
If $a$ is large enough, there exists a constant $c_5$ such that
\[
\int_M \left|\sigmatil + L\Wtil\right|^N \, d\mu \leq c_5 a^N.
\]
\end{claim}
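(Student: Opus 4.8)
The plan is to extract an $L^2$ bound for the right-hand side of the rescaled vector equation \eqref{momentumr}, turn it into a $W^{2,2}$ bound for $\Wtil$ by elliptic regularity for $L^*L$, and then pass to $L^N$ through the Sobolev embedding $W^{1,2}\hookrightarrow L^N$; the $\sigmatil$ contribution is dealt with separately using Claim~\ref{cLowerBoundEnergy}. Rewrite \eqref{momentumr} as $L^*L\Wtil=-2a\,\phitil^N\xi$. Since $\xi$ is Lipschitz, hence bounded, and Claim~\ref{cL2NBound} gives $\int_M\phitil^{2N}\,d\mu\le c_4$ as soon as $a$ is large, the right-hand side obeys
\[
\bigl\|a\,\phitil^N\xi\bigr\|_{L^2}\le a\,\bigl(\max_M|\xi|\bigr)\Bigl(\int_M\phitil^{2N}\,d\mu\Bigr)^{1/2}\le a\,\bigl(\max_M|\xi|\bigr)\sqrt{c_4}.
\]

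Next I would exploit ellipticity of $L^*L$. Its kernel is the finite-dimensional space of conformal Killing $1$-forms, which coincides with $\ker L$; replacing $\Wtil$ by its $L^2$-orthogonal projection off this kernel changes neither $L\Wtil$ nor the equation, so we may assume $\Wtil\perp\ker L$. Note that $\Wtil\in W^{1,2}$: indeed $L\Wtil\in L^2$ because $\|\sigmatil+L\Wtil\|_{L^2}=1$ and $\sigmatil\in L^2$, and the conformal Korn inequality gives $\|\Wtil\|_{W^{1,2}}\le C\|L\Wtil\|_{L^2}$ on the orthogonal complement of $\ker L$. The usual bootstrap from the $L^2$ right-hand side then places $\Wtil$ in $W^{2,2}$, with the a priori estimate $\|\Wtil\|_{W^{2,2}}\le C\,\|L^*L\Wtil\|_{L^2}$, where $C$ is the relevant Schauder/elliptic constant of $(M,g)$. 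Since $L$ is first order, $\|L\Wtil\|_{W^{1,2}}\le C'\|\Wtil\|_{W^{2,2}}$, and the embedding $W^{1,2}\hookrightarrow L^N$, available because $N=\frac{2n}{n-2}$, then yields
\[
\|L\Wtil\|_{L^N}\le C''\,\|L^*L\Wtil\|_{L^2}\le 2C''\bigl(\max_M|\xi|\bigr)\sqrt{c_4}\;a,
\]
with $C''$ depending only on the quantities listed in Theorem~\ref{thmMain}.

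Finally, $\|\sigmatil\|_{L^N}=\gamma^{-1/2}\|\sigma\|_{L^N}\le\|\sigma\|_{L^N}/(\sqrt{c_1}\,a)$ by Claim~\ref{cLowerBoundEnergy}, which is bounded — in fact $\le\|\sigma\|_{L^N}/\sqrt{c_1}$ once $a\ge1$ — so that $\|\sigmatil+L\Wtil\|_{L^N}\le c_5^{1/N}a$ for $a$ large enough, and raising to the $N$-th power gives the claim. The one delicate point is the elliptic step: one must quotient out the conformal Killing fields before invoking the a priori estimate for $L^*L$ and confirm the $W^{1,2}$-membership of $\Wtil$ as above; after that everything reduces to a chain of Hölder and Sobolev inequalities whose constants already appear in the list allowed for $a_0$.
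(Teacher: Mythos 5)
Your proposal follows essentially the same route as the paper: project $\Wtil$ off the conformal Killing fields, apply the elliptic estimate $\|\Wtil\|_{W^{2,2}}\le C\|L^*L\Wtil\|_{L^2}$ with the right-hand side controlled by Claim~\ref{cL2NBound}, pass to $\|L\Wtil\|_{L^N}$ via the Sobolev embedding, and absorb $\sigmatil$ using Claim~\ref{cLowerBoundEnergy}. The extra justification you give for the elliptic step (kernel identification and the Korn-type inequality) only fills in details the paper leaves implicit, so the argument is correct and matches the paper's proof.
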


\begin{proof}
Without loss of generality, we can assume that $\Wtil$ is orthogonal
to conformal Killing vector fields. From standard elliptic regularity
there exists a constant $C_1 > 0$ such that
\[
\left\| \Wtil \right\|_{W^{2,2}} 
\leq C_1 \left\| -\frac{1}{2} L^* L \Wtil \right\|_{L^2}.
\]
By the Sobolev injection theorem, there exists a constant $C_2$ such 
that
\[
\left\| L\Wtil \right\|_{L^N} \leq C_2 \left\| \Wtil \right\|_{W^{2,2}}.
\]
Combining the previous two estimates we get
\[ \begin{split}
\left\|L\Wtil\right\|_{L^N}^2
&\leq 
(C_1 C_2)^2 \int_M \left|-\frac{1}{2} L^* L \Wtil\right|^2 \, d\mu \\
&= 
(C_1 C_2)^2 a^2 \int_M \phitil^{2N} \left|\xi\right|^2 \, d\mu \\
&\leq 
C (\sup_M |\xi|)^2 c_4 a^2,
\end{split} \]
where we also used Claim \ref{cL2NBound} and set $C = (C_1 C_2)^2$. Hence,
\[ \begin{split}
\int_M \left|\sigmatil + L\Wtil\right|^N \, d\mu
&= 
\left\|\sigmatil + L\Wtil\right\|_{L^N}^N \\
&\leq 
\left( \left\|\sigmatil\right\|_{L^N} + \left\|L\Wtil\right\|_{L^N} \right)^N \\
&\leq 
2^N \left(
\left\|\sigmatil\right\|_{L^N}^N + \left\|L\Wtil\right\|_{L^N}^N 
\right) \\
 &\leq 
2^N \left(
\left\|\sigmatil\right\|_{L^N}^N 
+ C^{\frac{N}{2}} (\sup_M |\xi|)^N c_4^{\frac{N}{2}} a^N
\right).
\end{split} \]
Note that Claim \ref{cLowerBoundEnergy} tells us that 
$\left\|\sigmatil\right\|_{L^N} \leq 
\frac{1}{(c_1)^{1/2} a} \left\|\sigma\right\|_{L^N}$. 
This implies that Claim \ref{cLNBoundLW} holds with
\[
c_5 \definedas 
2^{N+1} C^{\frac{N}{2}} ( \sup_M |\xi| )^N c_4^{\frac{N}{2}}.
\]
\end{proof}

\begin{claim} \label{cN-2}
There exists a constant $c_6$ such that
\[
\int_M \left|\sigmatil + L\Wtil\right|^2 \phitil^{N-2} \, d\mu
\leq \frac{c_6}{a^{(N+1)\frac{N-2}{N}-2} } .
\]
\end{claim}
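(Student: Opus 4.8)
The plan is to derive this claim from a single application of H\"older's inequality which interpolates between the two estimates already in hand: Claim~\ref{cLNBound} on $\int_M \phitil^N\,d\mu$ and Claim~\ref{cLNBoundLW} on $\int_M |\sigmatil+L\Wtil|^N\,d\mu$. The equations \eqref{hamiltonianr}--\eqref{momentumr} will not be used again.

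Concretely, I would write $\left|\sigmatil + L\Wtil\right|^2 \phitil^{N-2}$ as the product $\left|\sigmatil + L\Wtil\right|^2 \cdot \phitil^{N-2}$ and apply H\"older's inequality with the conjugate exponents $N/2$ and $N/(N-2)$ (admissible since $N>2$ for $n\ge 3$), which yields
\[
\int_M \left|\sigmatil + L\Wtil\right|^2 \phitil^{N-2} \, d\mu
\leq
\left( \int_M \left|\sigmatil + L\Wtil\right|^N \, d\mu \right)^{\frac{2}{N}}
\left( \int_M \phitil^N \, d\mu \right)^{\frac{N-2}{N}}.
\]
This is the choice of exponents that matches the two claims and leaves no third factor to control. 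Inserting $\int_M |\sigmatil+L\Wtil|^N\,d\mu \le c_5 a^N$ (Claim~\ref{cLNBoundLW}, valid for $a$ large) and $\int_M \phitil^N\,d\mu \le c_2 a^{-(N+1)}$ (Claim~\ref{cLNBound}), the right-hand side becomes
\[
(c_5 a^N)^{\frac{2}{N}} (c_2 a^{-(N+1)})^{\frac{N-2}{N}}
= c_5^{\frac{2}{N}} c_2^{\frac{N-2}{N}}\; a^{\,2-(N+1)\frac{N-2}{N}}
= \frac{c_5^{2/N}\, c_2^{(N-2)/N}}{a^{(N+1)\frac{N-2}{N}-2}},
\]
so the claim follows with $c_6 \definedas c_5^{2/N}\, c_2^{(N-2)/N}$, for all $a$ large enough that Claim~\ref{cLNBoundLW} applies.

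I do not expect a genuine obstacle. The only points requiring care are the bookkeeping of the power of $a$ and checking that $N/2$ and $N/(N-2)$ are exactly the exponents under which the two factors are controlled by Claims~\ref{cLNBound} and~\ref{cLNBoundLW}. Note also that the sign of the resulting exponent $(N+1)\frac{N-2}{N}-2$ plays no role here: the inequality is a legitimate estimate whether or not it expresses decay in $a$.
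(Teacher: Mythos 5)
Your proof is correct and is essentially identical to the paper's: the same H\"older inequality with exponents $N/2$ and $N/(N-2)$, the same two input claims, and the same constant $c_6 = c_5^{2/N} c_2^{(N-2)/N}$. Your explicit remark that the estimate requires $a$ large (inherited from Claim~\ref{cLNBoundLW}) is a small but accurate point the paper leaves implicit.
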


\begin{proof}
From H\"older's inequality with Claims \ref{cLNBound} and 
\ref{cLNBoundLW} we get
\[ \begin{split}
\int_M \left|\sigmatil + L\Wtil\right|^2 \phitil^{N-2} \, d\mu
&\leq 
\left(\int_M \left|\sigmatil + L\Wtil\right|^N \, d\mu\right)^{\frac{2}{N}} 
\left(\int_M \phitil^N \, d\mu\right)^{\frac{N-2}{N}} \\
&\leq 
c_5^{\frac{2}{N}} a^2 \frac{c_2^{\frac{N-2}{N}}}{a^{(N+1)\frac{N-2}{N}}} ,
\end{split} \]
which proves the claim.
\end{proof}

\begin{claim} \label{c-N+2}
There exists a constant $c_7$ such that
\[
\int_M \left|\sigmatil + L\Wtil\right|^2 \phitil^{-N+2} \, d\mu 
\leq \frac{c_7}{a^{\frac{(N-2)(N-1)}{N}}}
\]
\end{claim}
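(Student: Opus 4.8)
The plan is to deduce Claim \ref{c-N+2} from Claim \ref{cMixedBound} by a single H\"older interpolation, exploiting the normalization built into the rescaling. Recall that the rescaling was chosen precisely so that
\[
\int_M \left|\sigmatil + L\Wtil\right|^2 \, d\mu = 1 ,
\]
a fact already used in the proof of Claim \ref{cL2NBound}. Thus the finite measure $d\nu \definedas |\sigmatil + L\Wtil|^2 \, d\mu$ is a probability measure, and the idea is to interpolate the weight $\phitil^{-N+2}$ between $\phitil^{-N-1}$, for which Claim \ref{cMixedBound} supplies a bound with a favourable power of $a$, and $\phitil^{0}$, for which $\nu(M)=1$.

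Concretely, since $N > 2$ one has the convex combination $-N+2 = \frac{N-2}{N+1}(-N-1) + \frac{3}{N+1}\cdot 0$, so I would write
\[
\left|\sigmatil + L\Wtil\right|^2 \phitil^{-N+2}
= \left(\left|\sigmatil + L\Wtil\right|^2 \phitil^{-N-1}\right)^{\frac{N-2}{N+1}}
\left(\left|\sigmatil + L\Wtil\right|^2\right)^{\frac{3}{N+1}}
\]
and apply H\"older's inequality with the conjugate exponents $\frac{N+1}{N-2}$ and $\frac{N+1}{3}$ (equivalently, Jensen's inequality for the concave function $t \mapsto t^{(N-2)/(N+1)}$ against the probability measure $\nu$). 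This gives
\[
\int_M \left|\sigmatil + L\Wtil\right|^2 \phitil^{-N+2} \, d\mu
\leq
\left(\int_M \left|\sigmatil + L\Wtil\right|^2 \phitil^{-N-1} \, d\mu\right)^{\frac{N-2}{N+1}} ,
\]
the second H\"older factor being $\left(\int_M |\sigmatil + L\Wtil|^2 \, d\mu\right)^{3/(N+1)} = 1$. Inserting the bound of Claim \ref{cMixedBound} into the remaining factor produces $c_3^{(N-2)/(N+1)} a^{-(N-2)(N-1)/N}$, since $\frac{(N+1)(N-1)}{N}\cdot\frac{N-2}{N+1} = \frac{(N-2)(N-1)}{N}$; this is exactly the asserted estimate with $c_7 \definedas c_3^{(N-2)/(N+1)}$.

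There is no genuine difficulty beyond picking the interpolation endpoints correctly. One is tempted to interpolate $\phitil^{-N+2}$ against the bounds already established for $\int_M \phitil^N$, $\int_M \phitil^{2N}$, $\int_M |\sigmatil + L\Wtil|^N$, or $\int_M |\sigmatil + L\Wtil|^2 \phitil^{N-2}$ (Claims \ref{cLNBound}--\ref{cN-2}), but because the target weight carries a \emph{negative} power of $\phitil$, those choices drag in powers of $\phitil$ of the wrong sign and lead to a messier, suboptimal power of $a$; anchoring instead at the probability mass $\int_M |\sigmatil + L\Wtil|^2 \, d\mu = 1$ is what yields the clean exponent. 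Note also that, unlike Claims \ref{cL2NBound} and \ref{cLNBoundLW}, this argument needs no largeness assumption on $a$, since both Claim \ref{cMixedBound} and the normalization hold for every $a>0$ for which a solution exists.
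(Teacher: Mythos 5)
Your proof is correct and is exactly the paper's argument: the same Hölder interpolation with exponents $\frac{N+1}{N-2}$ and $\frac{N+1}{3}$, using the normalization $\int_M |\sigmatil + L\Wtil|^2 \, d\mu = 1$ to kill one factor and Claim \ref{cMixedBound} for the other, yielding $c_7 = c_3^{(N-2)/(N+1)}$. No differences worth noting.
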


\begin{proof}
Using H\"older's inequality and Claim \ref{cMixedBound} we find
\[ \begin{split}
&\int_M \left|\sigmatil + L\Wtil\right|^2 \phitil^{-N+2} \, d\mu \\
&\qquad \leq 
\left(\int_M \left|\sigmatil + L\Wtil\right|^2 \, d\mu\right)^{\frac{3}{N+1}}
\left(
\int_M \left|\sigmatil + L\Wtil\right|^2 \phitil^{-N-1} \, d\mu
\right)^{\frac{N-2}{N+1}} \\
&\qquad \leq 
\frac{c_3^{\frac{N-2}{N+1}}}{a^{\frac{(N-2)(N-1)}{N}}},
\end{split} \]
which is the statement of the claim.
\end{proof}

We are now ready to prove Theorem \ref{thmMain}. For this we use the
H\"older inequality with Claims \ref{cN-2} and \ref{c-N+2} to get
\begin{equation} \label{eq3}
\begin{split}
1 &= 
\left(\int_M \left|\sigmatil + L\Wtil\right|^2 \, d\mu\right)^2 \\
&\leq 
\int_M \left|\sigmatil + L\Wtil\right|^2 \phitil^{N-2} \, d\mu 
\int_M \left|\sigmatil + L\Wtil\right|^2 \phitil^{-N+2} \, d\mu \\
&\leq 
c_6 c_7 a^{-(N+1)\frac{N-2}{N} + 2 - \frac{(N-2)(N-1)}{N}} \\
&\leq 
c_6 c_7 a^{6-2N}.
\end{split} 
\end{equation}
The exponent of $a$ is negative if and only if $3 \leq n \leq 5$. Hence 
if $a$ is large enough the inequality \eqref{eq3} gives a contradiction.

\providecommand{\bysame}{\leavevmode\hbox to3em{\hrulefill}\thinspace}
\providecommand{\MR}{\relax\ifhmode\unskip\space\fi MR }
\providecommand{\MRhref}[2]{%
  \href{http://www.ams.org/mathscinet-getitem?mr=#1}{#2}
}
\providecommand{\href}[2]{#2}


\end{document}